\immediate\write18{makeindex \jobname.nlo -s nomencl.ist -o \jobname.nls}
\documentclass[10pt,a4paper,twoside]{article}
	\usepackage[backend=bibtex,style=numeric,natbib=true,ibidtracker=strict]{biblatex} 
  	\addbibresource{CKFPref}
	\usepackage{nomencl} 
	\makenomenclature
	\usepackage{marginnote}
	\usepackage[margin=3cm,marginparwidth=2.5cm]{geometry}
	\usepackage{hyperref}
	\usepackage{color}
	\usepackage{enumerate}
	\usepackage{tikz}
	\usepackage{amsmath}
	\usepackage{amssymb}
	\usepackage{amsthm}
	\usepackage{dsfont}
	\usepackage{slashed}
	\newcommand{\ncom}{\newcommand}
	\newcommand{\rncom}{\renewcommand}
	\ncom{\af}{\mathbf{a}}
	\ncom{\bv}{\mathbf{b}}
	\ncom{\syml}{\nomenclature}
	\ncom{\innote}[1]{\reversemarginpar\marginnote{\textcolor{red}{#1}}}
	\ncom{\outnote}[1]{\normalmarginpar\marginnote{\textcolor{red}{#1}}}
	\ncom{\h}{\mathcal{H}}
	\ncom{\A}{\mathcal{A}}
	\ncom{\dee}{\mathrm{d}}
	\ncom{\dbar}{{\declareslashed{}{\text{--}}{0.04}{0.2}{\mathrm{d}}\slashed{\mathrm{d}}}}
	\ncom{\desda}{\Leftrightarrow}
	\ncom{\ket}[1]{\left| #1 \right\rangle}
	\ncom{\bra}[1]{\left\langle #1 \right|}
	\ncom{\braket}[2]{\left< #1\middle| #2\right>}
	\ncom{\set}[2]{\left\{ #1\:\middle|\: #2\right\}}
	\DeclareCiteCommand{\citeyear}{}{\bibhyperref{\printdate}}{, }{} 

	\DeclareMathAlphabet{\mathpzc}{OT1}{pzc}{m}{it}
	\rncom{\nomname}{List of symbols}

	\theoremstyle{plain}
	\newtheorem{thm}{Theorem}
	\newtheorem{lemma}{Lemma}
	
	\theoremstyle{definition}


\begin{document}

\title{Conway-Kochen and the Finite Precision Loophole}
\author{Ronnie Hermens\thanks{Department of Theoretical Philosophy, University of Groningen, Oude Bo\-te\-ringe\-straat 52, 9712 GL Gro\-ning\-en, Netherlands}}

\maketitle

\begin{abstract}
Recently Cator \& Landsman made a comparison between Bell's Theorem and Conway \& Kochen's Strong Free Will Theorem.
Their overall conclusion was that the latter is stronger in that it uses fewer assumptions, but also that it has two shortcomings.
Firstly, no experimental test of the Conway-Kochen Theorem has been performed thus far, and, secondly, because the Conway-Kochen Theorem is strongly connected to the Kochen-Specker Theorem it may be susceptible to the finite precision loophole of Meyer, Kent and Clifton.
In this paper I show that the finite precision loophole does not apply to the Conway-Kochen Theorem.
\end{abstract}

\section{Introduction}
Somewhat loosely speaking, both Bell's Theorem \cite{Bell64,CHSH69} and the (Strong) Free Will Theorem of Conway and Kochen (Conway-Kochen Theorem in this paper) \cite{ConwayKochen06,ConwayKochen09} show the impossibility of local deterministic hidden variable theories for quantum mechanics.
Obviously, if there is a concrete distinction between the two results, it has to lie in the details. 
Recently, Cator \& Landsman \cite{Cator14} made an investigation of these details by reformulating both theorems in a single framework that allows comparison.
They concluded that ``the Strong Free Will Theorem uses fewer assumptions than Bell's 1964 Theorem, as no appeal to probability theory is made, but this comes at a double price. 
Firstly, in the absence of an Aspect-type experiment using spin-one particles, the former so far has no experimental backing. 
Secondly, through its use of the Kochen-Specker Theorem as a lemma, the Strong Free Will Theorem is potentially vulnerable to the kind of `finite precision' challenge discussed (most recently) in \cite{Appleby05,Barrett-Kent04,Hermens11}.''

In this paper I provide a fifty percent discount on the Conway-Kochen Theorem by showing that the finite precision loophole can be closed.
In section \ref{KSsect} the Kochen-Specker Theorem is formulated using the framework introduced in \cite{Cator14}, and I explain that the finite precision loophole attacks a specific assumption that underlies the theorem. 
In section \ref{CKsect} it is highlight that this same assumption also underlies the Conway-Kochen Theorem.
However, I then show that, while relaxing this assumption does allow the possibility of noncontextual deterministic hidden variables (a possibility often alleged to be excluded by the Kochen-Specker Theorem), it does not allow for the possibility of local deterministic hidden variables.
More precisely, I argue that with a slight reformulation of one of the other assumptions underlying the Conway-Kochen Theorem, the theorem still holds.
The main asset of the theorem (that it doesn't make use of probability theory) remains intact.  


\section{Finite Precision and the Kochen-Specker Theorem}\label{KSsect}
\subsection{Introduction}
In a slogan, the Kochen-Specker Theorem \cite{KS67} is often taken to prove ``the impossibility of noncontextual hidden variable theories''.
But as with any foundational result there are footnotes to be placed by the slogan.
One of them is that, actually, the theorem only proves this impossibility under the assumption that a certain set of self-adjoint operators in the quantum formalism corresponds to observables.
And this assumption may be contested.

The idea that not all self-adjoint operators may correspond to observables has been around for some time: \citet{Wigner52} already wondered whether one could meaningfully attribute an observable to operators such as $X+P$ (the sum of the position and momentum operator).
However, it is one thing to argue that not all self-adjoint operators need to correspond to observables, but another to argue of a specific set of self-adjoint operators that they in fact do not correspond to observables.
And it is an argument of the latter type that is needed to disprove the Kochen-Specker slogan.
Such an argument was first provided by \citet{Meyer99} and then further developed in \citep{Barrett-Kent04,CliftonKent99,Hermens11,Kent99}.
It exploits the idea that actual measurements only have a finite precision while the Kochen-Specker Theorem makes use of the idealization of infinite precision.
In this section I quickly rehearse the Kochen-Specker Theorem and the finite precision argument. 
This then allows for a concrete study of how the finite precision argument affects the Conway-Kochen Theorem in section \ref{CKsect}. 

\subsection{Reformulation of the Kochen-Specker Theorem}
The Kochen-Specker Theorem revolves around the quantum theory of spin-1 particles which in turn are modeled on the Hilbert space $\mathbb{C}^3$.
With every unit vector $a$ in $\mathbb{R}^3$ is associated a self-adjoint operator $S_a$ denoting the spin along the $a$-axis.
Since $S_a$ is associated with the axis spanned by $a$ one has $S_a=S_{-a}$.
Thus there is a one-to-one correspondence between spin operators $S_a$ and 1-dimensional projection operators $P_a$ acting on $\mathbb{R}^3$ ($P_av=\langle a,v\rangle a$).
One-dimensional projection operators in turn can be associated with points on the 2-sphere $\mathcal{S}^2$ where opposite points are then being identified with each other. 

A \textit{frame} in $\mathbb{R}^3$ will be an ordered triple $\af=[P_{a_1},P_{a_2},P_{a_3}]$ of 1-dimensional projection operators with $\{a_1,a_2,a_3\}$ an orthonormal basis of $\mathbb{R}^3$.
The set of all frames will be denoted by $\mathcal{F}$.
For every frame $\af\in\mathcal{F}$ the operators $S_{a_1}^2,S_{a_2}^2,S_{a_3}^2$ are pairwise commuting, have spectrum $\{0,1\}$, and sum to two times the identity.
Thus quantum theory predicts that (under the assumption that these operators correspond to observables) a joint measurement of these operators yields a result from the set
\begin{equation}
	T:=\{(1,1,0),(1,0,1),(0,1,1)\}.
\end{equation}

An essential role in Kochen-Specker type arguments is played by frame functions.
A \emph{frame function on} $\mathcal{O}$ is a function $\lambda:\mathcal{O}\subset\mathcal{F}\to T$ such that
\begin{equation}
	\forall \af,\af'\in\mathcal{O}:~\exists i,j ~\text{s.t.}~P_{a_i}=P_{a_j'}~\Rightarrow \lambda_i(\af)=\lambda_j(\af').
\end{equation} 
The idea of a frame function is that it attributes values to operators in a noncontextual way (independent of the frame under consideration).
This is elucidated by noting that every frame function gives rise to a function $c:\mathcal{D}(\mathcal{O})\subset \mathcal{S}^2\to\{0,1\}$ with $\mathcal{D}(\mathcal{O})$ the set of points $a$ for which there is a frame $\af\in\mathcal{O}$ with $P_{a_i}=P_a$ for some $i$ and
\begin{equation}\label{coloring}
	c(a):=\lambda_i(\af).
\end{equation}
This function is an example of a coloring function.
That is, a function $c:\mathcal{D}\subset\mathcal{S}^2\to\{0,1\}$ is called a \emph{coloring function on} $\mathcal{D}$ if 
\begin{enumerate}
\item $c(a)=c(-a)$ whenever $a,-a\in\mathcal{D}$,
\item $c(a_1)+c(a_2)+c(a_3)=2$ whenever $[P_{a_1},P_{a_2},P_{a_3}]$ is a frame and $a_1,a_2,a_3\in\mathcal{D}$,
\item $c(a_1)+c(a_2)\geq1$ whenever $a_1\bot a_2$ and $a_1,a_2\in\mathcal{D}$.
\end{enumerate}

The main mathematical result on which the Kochen-Specker Theorem builds can now be formulated.
\begin{lemma}\label{lemma}
There exists no frame function on $\mathcal{F}$. Equivalently, there is no coloring function on $\mathcal{S}^2$.
\end{lemma}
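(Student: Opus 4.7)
The plan is to dispatch the equivalence first, then reduce the core claim to a finite combinatorial impossibility.

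For the equivalence, starting from a frame function $\lambda$ on $\mathcal{F}$, equation (\ref{coloring}) already gives the natural candidate $c$ on $\mathcal{D}(\mathcal{F})=\mathcal{S}^2$; the consistency built into the definition of a frame function makes $c$ well-defined, axiom 1 is immediate from $P_a=P_{-a}$, axiom 2 follows because every element of $T$ sums to $2$, and axiom 3 is obtained by extending any orthogonal pair $\{a_1,a_2\}$ to a frame $[P_{a_1},P_{a_2},P_{a_3}]$ and applying axiom 2 together with $c\in\{0,1\}$. In the reverse direction, given a coloring function $c$, the rule $\lambda_i(\af):=c(a_i)$ lands in $T$ by axiom 2 and trivially satisfies the noncontextuality condition, so the two formulations are interchangeable.

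It therefore suffices to rule out the existence of a coloring function on $\mathcal{S}^2$. I would argue by contradiction: suppose such a $c$ exists. The axioms encode a constraint satisfaction problem, where each orthonormal triple carries exactly one $0$ and two $1$'s, and no orthogonal pair is jointly labelled $0$. The strategy is to exhibit a finite subset $\Sigma\subset\mathcal{S}^2$ whose induced orthogonality hypergraph admits no labelling consistent with these rules, immediately contradicting the hypothetical $c$.

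The main obstacle, and the combinatorial heart of the result, is producing such a $\Sigma$ explicitly. I would follow one of two standard routes. The first is Kochen and Specker's original one: prove a geometric ``chain lemma'' stating that if $c(a)=0$ and $b\in\mathcal{S}^2$ is sufficiently close to $a$ (for instance $|\langle a,b\rangle|^2\le 1/3$), then one can interpolate a short ladder of orthonormal frames whose coloring forces $c(b)=0$; iterating this propagates the value $0$ over an open region and eventually produces two orthogonal vectors both labelled $0$, violating axiom 3. The second and combinatorially cleaner route is to exhibit a concrete finite configuration---Kochen and Specker's $117$ directions, or a more economical variant such as Peres's $33$-vector set or Conway and Kochen's $31$-vector set---and verify by direct case analysis that the orthogonality relations on $\Sigma$ admit no $\{0,1\}$-labelling satisfying axioms 2 and 3. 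Either way, producing the configuration and checking its inconsistency is the only step with genuine content; the rest of the argument is bookkeeping around the three coloring axioms.
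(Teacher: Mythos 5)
The paper offers no proof of this lemma at all: it is stated as the known Kochen--Specker result and delegated to \cite{KS67}, so there is no in-paper argument to compare against. Your outline follows the standard route from the literature. The equivalence half is correctly and completely executed: well-definedness of $c$ via the frame-function consistency condition and $P_a=P_{-a}$, axiom 2 from the elements of $T$ summing to $2$, axiom 3 by completing an orthogonal pair to a frame, and the converse by setting $\lambda_i(\af):=c(a_i)$, which lands in $T$ precisely because a $\{0,1\}$-valued triple summing to $2$ has exactly one zero.

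The gap is at the step you yourself identify as ``the only step with genuine content'': no finite uncolorable configuration is actually produced or verified, and no chain lemma is actually proved. Naming the $117$-, $33$-, or $31$-vector sets is a citation, not a proof; the entire mathematical substance of the lemma lives in exhibiting one such $\Sigma$ and checking (by the case analysis or parity argument appropriate to the chosen set) that its orthogonality relations admit no admissible labelling. As it stands your text reduces the lemma to a known theorem rather than proving it --- which is defensible, and is exactly what the paper does, but should be stated as such. One further slip: in your sketch of the Kochen--Specker propagation argument, the condition $|\langle a,b\rangle|^2\le 1/3$ describes vectors that are nearly \emph{orthogonal}, not nearly parallel; the correct hypothesis for the ten-point interpolation graph is a small angle between $a$ and $b$ (e.g.\ $\sin\theta\le 1/3$, i.e.\ $|\langle a,b\rangle|^2\ge 8/9$), under which the value $0$ is forced to propagate from one endpoint to the other.
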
 

Somewhat intuitively, this already establishes the impossibility of attributing definite values to all self-adjoint operators in a noncontextual way.
But the precise philosophical importance can only be highlighted by selecting out concrete assumptions that together require the existence of a frame function on $\mathcal{F}$.
Here I present a set of assumptions that is close the the formulation used in \cite{Cator14}. 
\begin{itemize}
\item \textbf{Determinism} There exists a set $X$ together with a surjective function $A:X\to \mathcal{O}_A\subset\mathcal{F}$ and a function $F:X\to T$. 
\end{itemize}
The associated interpretation is that $\mathcal{O}_A$ selects out the frames $\af$ for which there is an observable corresponding to the joint measurement of $S_{a_1}^2,S_{a_2}^2,S_{a_3}^2$.
With a slight abuse of language I will call the frames in $\mathcal{O}_A$ observables.
An element $x\in X$ then determines both the observable $A(x)$ to be measured and the outcome of the measurement $F(x)$.

\begin{itemize}
\item \textbf{Value Definiteness} There are functions $Z:X\to X_Z$ and $\hat{F}:\mathcal{O}_A\times X_Z\to T$ such that 
\begin{equation}
	F(x)=\hat{F}(A(x),Z(x))~\forall x\in X.
\end{equation}  
\end{itemize}
This assumption establishes that for every $z\in X_Z$ there is a function $\lambda_z:\mathcal{O}_A\to T$ given by 
\begin{equation}
	\lambda_z(\af):=\hat{F}\left(\af,z\right).
\end{equation}
Intuitively $z$ takes on the role of a hidden variable state and the intended reading is that while $x$ only assigns a definite value to the observable that is determined to be measured, the existence of $\hat{F}$ ensures that all observables that are not measured also have a definite value.
But it may also be noted that without any further constraints this assumption is empty as one may just take $X_Z=X$ and $Z=\mathrm{id}$.
Further constraints are then given by the following assumption.\footnote{This way of introducing hidden variable states is perhaps somewhat cumbersome, but it will be helpful for seeing the analogy with the Conway-Kochen Theorem in the next section.} 

\begin{itemize}
\item \textbf{Noncontextuality} For every $z\in X_Z$, $\lambda_z$ is a frame function on $\mathcal{O}_A$.
\end{itemize}
The idea behind this assumption is that the elements of a frame $\af$ or, equivalently, the operators $S_{a_i}^2$ have an ontological status independent of the frame in which they are considered.
That is, the value assigned to $S_{a_i}^2$ via $(\lambda_z(\af))_i$ only depends on $z$ and not on $\af$.

It is this last assumption that is often taken to be ruled out for hidden variable theories by the Kochen-Specker theorem (taking Determinism and Value Definiteness as indispensable ingredients for such a theory).  
In fact, the first argument against the assumption is due to Bell and actually predates the Kochen-Specker Theorem \cite{Bell66}.
In his ``judo-like manoeuvre'' \cite{Shimony84} he argued that Bohr already time and again emphasized that outcomes of experiments cannot be separated from the experimental setup used, and that there is no reason to suppose that this essential ingredient to the Copenhagen interpretation should be something to be denied in a hidden variable theory.
However, this maneuver is not forced upon the proponent of hidden variables by the Kochen-Specker Theorem because, to connect the framework described thus far with Lemma \ref{lemma}, one final assumption is required.

\begin{itemize}
	\item \textbf{Identification Principle} Every frame corresponds to an observable, i.e., $\mathcal{O}_A=\mathcal{F}$.
\end{itemize}
These assumptions together allow the following formulation of the Kochen-Specker Theorem.

\begin{thm}\label{KSthm}
Determinism, Value Definiteness, Noncontextuality and the Identification Principle are mutually exclusive. 
\end{thm}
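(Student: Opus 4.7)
The plan is to assemble the four assumptions into a frame function on all of $\mathcal{F}$, and then invoke Lemma \ref{lemma} to derive a contradiction. The chain of reasoning is essentially dictated by the formalism already set up: Value Definiteness supplies, for each $z\in X_Z$, the candidate map $\lambda_z:\mathcal{O}_A\to T$ given by $\lambda_z(\af)=\hat{F}(\af,z)$; Noncontextuality asserts that $\lambda_z$ is in fact a frame function on $\mathcal{O}_A$; the Identification Principle then widens the domain from $\mathcal{O}_A$ to all of $\mathcal{F}$.

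Before invoking the lemma I would check that $X_Z$ is nonempty, so that such a $z$ is actually available. This is immediate: $\mathcal{F}$ is nonempty (any orthonormal basis of $\mathbb{R}^3$ yields a frame), and under the Identification Principle $\mathcal{O}_A=\mathcal{F}$, so the surjectivity of $A:X\to\mathcal{O}_A$ given by Determinism forces $X$ to be nonempty, whence $X_Z\supseteq Z(X)$ is nonempty as well. Fixing any $z\in X_Z$ and combining the ingredients above produces a frame function $\lambda_z$ on all of $\mathcal{F}$, which Lemma \ref{lemma} rules out.

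I do not anticipate any real obstacle in the argument, and I would keep the written proof to one short paragraph of derivation. The design of the four assumptions has been chosen precisely so that all the mathematical difficulty is packaged into the Kochen-Specker lemma, which has already been stated. The conceptual value of Theorem \ref{KSthm} therefore does not lie in this deduction but in the fact that the hypotheses are cleanly separated; this is what will subsequently allow the finite-precision critique, to be discussed in section \ref{CKsect}, to be aimed squarely at the Identification Principle rather than at Noncontextuality.
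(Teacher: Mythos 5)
Your proposal is correct and follows exactly the route the paper intends (the paper leaves the deduction implicit, the assumptions having been engineered so that Value Definiteness yields $\lambda_z$, Noncontextuality makes it a frame function on $\mathcal{O}_A$, the Identification Principle extends the domain to $\mathcal{F}$, and Lemma \ref{lemma} gives the contradiction). Your extra check that $X_Z$ is nonempty via the surjectivity of $A$ onto the nonempty set $\mathcal{F}$ is a sensible, if minor, addition.
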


\subsection{The Finite Precision Loophole}
It should be mentioned that the theorem actually proven by Kochen-Specker is significantly stronger than the one formulated here.
That is, Theorem \ref{KSthm} also holds when the Identification Principle is restricted to a specific finite subset of frames $\mathcal{O}_{KS}$.
That is, the assumption $\mathcal{O}_A=\mathcal{F}$ may be replaced by $\mathcal{O}_A\supset\mathcal{O}_{KS}$.
In the original proof $\mathcal{O}_{KS}$ consists of 133 frames constructed from 117 vectors, and many improvements of the proof have been introduced since that use smaller sets.\footnote{In three dimensions Conway and Kochen hold the record for the smallest number of vectors (31) \cite[p. 114]{Peres02}. The constructions of Peres \cite[p. 198]{Peres02} and Bub \cite{Bub96} both use 33 vectors of which the latter requires the least number of frames of all these results. Recently, the number of contexts has been minimized to 7 for the Hilbert space $\mathbb{C}^6$ \citep{Lisonek14}. For more discussion on comparing sizes of Kochen-Specker sets see \citep{Bengtsson12,PMMM05} and references therein.}
Also, another strengthening of the Kochen-Specker Theorem has been proven recently, which establishes that for any pair of vectors one can construct a finite set of frames for which the theorem holds \cite{Abbott14,Abbott12}.

These results all focus on the minimal requirements for a set $\mathcal{O}\subset\mathcal{F}$ to be such that there does not exist a frame function on $\mathcal{O}$.
What they leave open is the question of how big $\mathcal{O}$ can be such that frame functions on $\mathcal{O}$ \emph{do} exist.
It was shown by Meyer, Kent and Clifton that this set can in fact be quite big:
\begin{thm}\label{MKCthm}
	There exist sets $\mathcal{O}_{MKC}$ that admit frame functions and that are dense in $\mathcal{F}$ in the sense that for every $\epsilon>0$, for every $\af\in\mathcal{F}$ there is an $\af'\in\mathcal{O}_{MKC}$ such that
\begin{equation}
	\max_i\min_j\|P_{a_i}-P_{a'_j}\|<\epsilon.
\end{equation}
Consequently, $\mathcal{D}(\mathcal{O}_{MKC})$ is dense in $\mathcal{S}^2$ in the usual sense.
\end{thm}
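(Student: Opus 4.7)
The plan is to build $\mathcal{O}_{MKC}$ as the set of frames whose vectors all lie in some carefully chosen dense, countable subset $\mathcal{D}\subset\mathcal{S}^2$, and to produce a frame function on $\mathcal{O}_{MKC}$ by first constructing a coloring function $c:\mathcal{D}\to\{0,1\}$ and then setting $\lambda(\af):=(c(a_1),c(a_2),c(a_3))$ for $\af=[P_{a_1},P_{a_2},P_{a_3}]\in\mathcal{O}_{MKC}$. Given such a $c$, the coloring identity (\ref{coloring}) runs in reverse and automatically delivers a frame function, since the value attached to each $P_{a_i}$ depends only on $a_i$ and not on the frame in which it appears.

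First I would take $\mathcal{D}$ to be (essentially) the set of rational unit vectors $\mathcal{S}^2\cap\mathbb{Q}^3$. This set is countable, closed under negation, and dense in $\mathcal{S}^2$: one approximates any direction by a rational vector and renormalises. To upgrade this to density of $\mathcal{O}_{MKC}$ in $\mathcal{F}$, given $\af\in\mathcal{F}$ and $\epsilon>0$ I would approximate step by step: first pick a rational unit vector $a_1'$ within $\epsilon/3$ of $a_1$; next, within the rational two-plane orthogonal to $a_1'$, pick a rational unit vector $a_2'$ close to the projection of $a_2$; finally set $a_3'=a_1'\times a_2'$. A standard perturbation estimate translates vector-closeness into the required bound on $\max_i\min_j\|P_{a_i}-P_{a_j'}\|$, and also gives density of $\mathcal{D}(\mathcal{O}_{MKC})$ in $\mathcal{S}^2$.

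The crux is the construction of $c$, for which I would invoke the Godsil--Zaks / Meyer observation that the rational unit sphere admits a $\{0,1\}$-coloring consistent with the Kochen--Specker constraints. Concretely, every $a\in\mathcal{D}$ has a unique-up-to-sign primitive integer representative $(n_1,n_2,n_3)$ with $n_1^2+n_2^2+n_3^2=N^2$, and one defines $c(a)$ by a rule on the parities of the $n_i$ modulo $2$. The key arithmetic fact is that for two primitive integer triples $(n_i)$ and $(m_i)$ representing orthogonal rational unit vectors, the constraint $\sum_i n_im_i=0$ together with the sum-of-squares condition sharply restricts the possible parity patterns. One then checks case by case that every rational orthogonal triple has exactly one member of color $0$, so that properties (i)--(iii) of a coloring function hold; (i) is built in through the sign ambiguity, (ii) is the parity computation, and (iii) follows by extending any orthogonal pair to a triple.

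The main obstacle is precisely this last step. The density claims and the passage from coloring function to frame function are essentially bookkeeping, but verifying that the parity rule is globally consistent across all rational orthogonal triples on $\mathcal{S}^2$ is a genuinely number-theoretic argument. It is here that any concrete version of the proof has to do real work, and historically it is this ingredient (rather than the topological density) that made the Meyer--Kent--Clifton result surprising.
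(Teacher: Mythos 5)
Your proposal reconstructs exactly the route the paper relies on: it does not prove Theorem \ref{MKCthm} itself but cites Meyer's construction, namely the set of frames built from rational unit vectors, colored via the parity pattern of primitive integer representatives (the Godsil--Zaks fact that an orthogonal rational triple has exactly one member with a prescribed odd coordinate), with Clifton--Kent's ``totally incompatible'' sets mentioned as a generalization; so you are on the paper's own path and correctly locate the number-theoretic crux. One small slip: the density of $\mathcal{S}^2\cap\mathbb{Q}^3$ does not follow by ``approximating and renormalising'' (the norm of a rational vector is generally irrational), but rather from, e.g., stereographic projection from a rational point, and similarly the density of rational unit vectors in a rational great circle needs its own (standard) argument.
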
 
Meyer showed in \cite{Meyer99} that one can take the set of frames 
\begin{equation}
	\left\{[P_{a_1},P_{a_2},P_{a_3}]\in\mathcal{F}\:\middle|\:a_i\in\mathbb{Q}^3\right\}.
\end{equation}
This result was further generalized by Kent and Clifton who showed that for any finite-dimensional Hilbert space a set $\mathcal{O}_{MKC}$ can be constructed such that all frames in it are totally incompatible \cite{CliftonKent99,Kent99}.
This means that for every unit vector $a_i$ there is at most one frame in $\mathcal{O}_{MKC}$ containing $P_{a_i}$.
The existence of frame functions on $\mathcal{O}_{MKC}$ then becomes a triviality. 
Furthermore, it allows one to show that there are enough frame functions such that every quantum state can be represented by a probability distribution over these frame functions \cite{CliftonKent99,Hermens11}. 

At first sight it looks like one arrives at a stalemate concerning the possibility of noncontextual hidden variable theories.
If $\mathcal{O}_{KS}\subset\mathcal{O}_A$ for some Kochen-Specker set $\mathcal{O}_{KS}$, then noncontextuality fails.
If on the other hand $\mathcal{O}_A\subset\mathcal{O}_{MKC}$, then noncontextuality is a possibility.
The finite precision argument is the tiebreaker here.
The idea is that due to the finite precision of measurements, one cannot establish with infinite precision which frame is selected out from $\mathcal{F}$ when a measurement is performed.
Thus on this view the Identification Principle can be weakened to obtain something like the following.
\begin{itemize}
\item \textbf{Identification Principle}$_{\textbf{FP}}$ For every frame $\af\in\mathcal{F}$ there is an observable $\af'\in\mathcal{O}_A$ such that $\af$ and $\af'$ are empirically indistinguishable.
\end{itemize}
Theorem \ref{MKCthm} establishes that this principle can be maintained: no matter how precisely a frame is determined experimentally, there are always frames within $\mathcal{O}_{MKC}$ compatible with it.
The set of frames $\mathcal{O}_{MKC}$ is thus rich enough to reproduce all the empirical predictions one could make under the assumption that $\mathcal{F}$ is the set of `real' frames.
The upshot is that the finite precision loophole allows the construction of noncontextual hidden variable theories that respect the Kochen-Specker Theorem by giving up the Identification Principle.
In the next section I show that even though the same principle is adopted in the proof of the Conway-Kochen Theorem, it can not be exploited to undermine the import of the theorem, being, the impossibility of a deterministic local theory that is compatible with `free will'.
But before discussing this result I want to make two final remarks.

First, the noncontextual MKC-models are surely counterintuitive and therefore have yielded quite some criticism in a wide variety.
Some have tried to dismiss the MKC-models as viable possibilities by arguing that they cannot reproduce all empirical predictions of quantum mechanics (such as \cite{Cabello02}), while others have focused more on the unsatisfactory aspects of the models themselves (e.g. \cite{Appleby05}).
For the details on these objections and possible responses I refer the reader to the discussions in \citep{Appleby05,Barrett-Kent04,Held13,Hermens11} and references therein. 
However, there is one recent objection that deserves a quick response 
It focuses on the observation that the restriction to a proper subset of frames may be interpreted as the impossibility to align a spin-measurement device along a certain axis in real space.
In \citep{Held13} this is taken to indicate the \emph{non-existence} of these directions in real space.
While this is a subtle step that may deserve some debate, I'd only like to point out that the continuity of space itself is not an innocent assumption to be upheld for a hidden variable theory.
Furthermore, with the advancements in current physics, it is not unlikely that this assumption is to be dropped anyway as for example in loop quantum gravity (c.f. \citep{Smolin00} for a friendly conceptual introduction.).   

Second, the Kochen-Specker Theorem may also be avoided by dropping one of the assumptions Determinism or Value Definiteness. 
And one can argue for the possibility of noncontextual interpretations of quantum mechanics along those lines.
Although on the present presentation this would require a reformulation of Noncontextuality (because it relies on the earlier assumptions being true), a sketch of such arguments can still be given.
Consider someone who takes Determinism seriously. 
Then the state of the world determines which observable is to be measurement and the outcome of this measurement.
And nothing more seems to be needed to explain what we actually observe.
There is then no reason to assume that unmeasured observables also should have a definite value (i.e., Value Definiteness is abandoned).
In fact, one may argue that unmeasured observables are not observables at all, and the value attribution given by Determinism may then be considered to be noncontextual.
Another possibility is to argue that Determinism fails as it does for example in the consistent histories approach.
For an argument that Noncontextuality can be saved on this view see \citep{Griffiths13}. 
This all demonstrates that the slogan that ``the Kochen-Specker Theorem shows that Nature is contextual'' is even more reckless than the one I started with in this section. 

\section{Finite Precision and the Conway-Kochen Theorem}\label{CKsect} 
\subsection{The Conway-Kochen Theorem}
The Conway-Kochen Theorem \cite{ConwayKochen06,ConwayKochen09} is preceded  by a long history of investigations making use of the main ingredient of the theorem: a system of two entangled spin-1 particles.
This history starts with \citet{HeywoodRedhead83} who used this system to investigate the relation between Noncontextuality (as assumed in the Kochen-Specker Theorem) and Locality (as assumed in the derivation of any Bell-inequality).
Since then it has also occurred in \cite{BrownSvetlichny90,Clifton93,Stairs83} and (unsurprisingly) it has been claimed that Conway and Kochen haven't proven anything that wasn't already known \citep{Goldstein10}.
I tend to disagree with this view (c.f. \citep[\S 4.4.4]{Hermens10}) and this disagreement is backed up by \citet{Cator14}.
What their paper shows is that the Conway-Kochen Theorem can be understood as an adaption of the Kochen-Specker Theorem to provide an argument against local determinism that is \emph{independent} of the Bell-type argument instead of a reformulation or investigation thereof.  
This independence comes with a prize, and one of them is that the theorem may be susceptible to the finite precision loophole.

To investigate this, the theorem has to be formulated in a way that elucidates where the Identification Principle comes in as an assumption.
This is done by making a slight adaption to the assumptions introduced in \cite{Cator14}.
The setting is a two agent version of the Kochen-Specker Theorem discussed in the previous section.
\begin{itemize}
\item \textbf{Determinism} There exists a set $X$ together with surjective functions 
\begin{equation}	
	A:X\to \mathcal{O}_A\subset\mathcal{F},~B:X\to \mathcal{O}_B\subset\mathcal{F}
\end{equation}
and functions $F,G:X\to T$ where $A$ and $B$ represent the performed measurements and $F$ and $G$ their outcomes.
\item \textbf{Parameter Independence} There are functions $Z:X\to X_Z$ and 
\begin{equation}
	\hat{F}:\mathcal{O}_A\times X_Z\to T,~\hat{G}:\mathcal{O}_B\times X_Z\to T
\end{equation} 
such that 
\begin{equation}
	F(x)=\hat{F}(A(x),Z(x)),~G(x)=\hat{G}(B(x),Z(x))~\forall x\in X.
\end{equation} 
\item \textbf{Freedom} $(A,B,Z)$ are independent in the sense that for all $(\af,\bv,z)\in\mathcal{O}_A\times\mathcal{O}_B\times X_Z$ there is an $x\in X$ such that $(A(x),B(x),Z(x))=(\af,\bv,z)$, i.e., the three component function $(A,B,Z)$ is surjective. 
\item \textbf{Identification Principle} Every frame corresponds to an observable: $\mathcal{O}_A=\mathcal{O}_B=\mathcal{F}$.
\end{itemize}

When comparing with the Kochen-Specker Theorem, one finds that Parameter Independence replaces Value Definiteness.
And though, like Value Definiteness, Parameter Independence is an empty assumption without any further constraints, it is in fact a stronger assumption.
Value Definiteness for the two particle system only requires the existence of a function $\hat{V}:\mathcal{O}_A\times\mathcal{O}_B\times X_Z\to T\times T$ such that $\hat{V}(A(x),B(x),Z(x))=(F(x),G(x))$ while Parameter Independence requires in addition that $\hat{V}$ factorizes in two functions $\hat{F}$ and $\hat{G}$.
It thus embodies that the settings $\af$ and $\bv$ can be selected independently.
The freedom assumption further establishes that this selection is free in the sense that it is not constrained by the definite values assigned to each setting by $x$.    

Finally, the Conway-Kochen Theorem requires an empirical assumption: 
\begin{itemize}
\item \textbf{Nature} For any pair of frames $\af=[P_{a_1},P_{a_2},P_{a_3}],\bv=[P_{b_1},P_{b_2},P_{b_3}]$ if there are $i,j$ such that $P_{a_i}=P_{b_j}$, then
\begin{equation}
	\hat{F}_i(\af,z)=\hat{G}_j(\bv,z)~\forall z\in X_Z.
\end{equation}
\end{itemize}
The Conway-Kochen Theorem then states:
\begin{thm}\label{thmckfp}
Determinism, Parameter Independence, Freedom, the Identification Principle and Nature are mutually exclusive.
\end{thm}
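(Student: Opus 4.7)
The plan is to reduce Theorem \ref{thmckfp} to Lemma \ref{lemma} by showing that, under the five assumptions, one can extract from $\hat{F}(\cdot,z)$ an honest frame function on all of $\mathcal{F}$, which Lemma \ref{lemma} forbids.

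I would proceed as follows. Fix any $z\in X_Z$ (Freedom guarantees $X_Z$ is nonempty whenever $\mathcal{F}$ is). By the Identification Principle, $\mathcal{O}_A=\mathcal{F}$, so Parameter Independence hands us a function
\begin{equation}
\lambda_z:\mathcal{F}\to T,\qquad \lambda_z(\af):=\hat{F}(\af,z),
\end{equation}
defined on every frame. The target is to verify the frame-function condition for $\lambda_z$: whenever $\af,\af'\in\mathcal{F}$ and $P_{a_i}=P_{a'_j}$, one must have $\lambda_{z,i}(\af)=\lambda_{z,j}(\af')$.

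The key manoeuvre is to apply the Nature assumption twice. Since $\mathcal{O}_B=\mathcal{F}$ as well, Nature applied with the pair $(\af,\af')$ at indices $(i,j)$ yields $\hat{F}_i(\af,z)=\hat{G}_j(\af',z)$. Applied again with the pair $(\af',\af')$ at the trivial index match $(j,j)$ it yields $\hat{F}_j(\af',z)=\hat{G}_j(\af',z)$. Chaining the two equalities gives $\lambda_{z,i}(\af)=\lambda_{z,j}(\af')$, so $\lambda_z$ is a frame function on $\mathcal{F}$. Lemma \ref{lemma} rules this out, and the contradiction completes the proof. Determinism and Freedom enter indirectly: Determinism supplies the outcome functions $F,G$ that Parameter Independence factorises, and Freedom guarantees that $\hat{F}$ and $\hat{G}$ are constrained on all of $\mathcal{O}_A\times X_Z$ and $\mathcal{O}_B\times X_Z$ respectively (so that the joint realisation of any $(\af,\bv,z)$ is non-vacuous and Nature is a genuine condition rather than a vacuous one on unrealised triples).

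There is no real obstacle to overcome; the only subtlety worth flagging is the symmetric use of Nature on the diagonal pair $(\af',\af')$, which is what permits the $\hat{G}$ appearing in the off-diagonal application to be re-expressed as $\hat{F}$ and hence to close up entirely within a single colouring. Once that identification is made, the problem collapses to the single-agent Kochen--Specker setting and Lemma \ref{lemma} does the remaining work.
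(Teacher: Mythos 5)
Your proof is correct and follows essentially the same route as the paper's: the paper proves Theorem \ref{thmckfp} by first establishing (Lemma \ref{lemma2}) that Determinism, Parameter Independence, Freedom and Nature force each $\lambda_z$ to be a frame function on $\mathcal{O}_A$, then invoking the Identification Principle and Lemma \ref{lemma} for the contradiction, exactly as you do. The only (immaterial) difference is the choice of Bob's frame in the double application of Nature --- the paper takes $\bv=\af$ for both steps, chaining $\hat{F}_i(\af,z)=\hat{G}_i(\bv,z)=\hat{F}_j(\af',z)$, whereas you take Bob's frame to be $\af'$.
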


\subsection{Closing the Finite Precision Loophole}
Although the proof for Theorem \ref{thmckfp} may be found in \cite{Cator14}, to get a better grip on the role of the Identification Principle it is useful to prove it again with a small detour.

\begin{lemma}\label{lemma2}
If $\mathcal{O}_A=\mathcal{O}_B$, then Determinism, Parameter Independence, Freedom and Nature imply the existence of a frame function on $\mathcal{O}_A$.
\end{lemma}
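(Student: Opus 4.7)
The plan is to build the frame function directly from $\hat{F}$ by freezing the hidden variable. Pick any $z\in X_Z$ (such a $z$ exists, for instance by applying Freedom to any element of $X$, which is nonempty by Determinism). Define
\begin{equation}
	\lambda_z:\mathcal{O}_A\to T,\qquad \lambda_z(\af):=\hat{F}(\af,z).
\end{equation}
The only thing to verify is the frame-function condition: whenever $\af,\af'\in\mathcal{O}_A$ and there are indices $i,j$ with $P_{a_i}=P_{a'_j}$, one has $\lambda_{z,i}(\af)=\lambda_{z,j}(\af')$.

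The key idea is to apply Nature twice. First, since $\mathcal{O}_A=\mathcal{O}_B$, the frame $\af'$ also lies in $\mathcal{O}_B$, so Nature applied to the pair $(\af,\af')$ with the matching indices $i,j$ gives
\begin{equation}
	\hat{F}_i(\af,z)=\hat{G}_j(\af',z).
\end{equation}
Second, applying Nature to the pair $(\af',\af')$, which is legitimate because $\af'$ lies in both $\mathcal{O}_A$ and $\mathcal{O}_B$, with the trivially matching indices $(j,j)$, yields
\begin{equation}
	\hat{F}_j(\af',z)=\hat{G}_j(\af',z).
\end{equation}
Combining these two identities produces $\hat{F}_i(\af,z)=\hat{F}_j(\af',z)$, which is exactly the required equality $\lambda_{z,i}(\af)=\lambda_{z,j}(\af')$.

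There is essentially no obstacle here beyond noticing the right way to use Nature: it is a bridge between $\hat{F}$ and $\hat{G}$, but once we know that $\mathcal{O}_A=\mathcal{O}_B$, we can turn it into a self-consistency condition on $\hat{F}$ alone by feeding the same frame into both slots. Parameter Independence is what guarantees that $\hat{F}$ depends only on $(\af,z)$ and not on $\bv$, so that $\lambda_z$ is well defined on $\mathcal{O}_A$; Freedom is used only to ensure $X_Z\neq\emptyset$ so that the construction can start.
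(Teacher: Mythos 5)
Your construction and verification coincide with the paper's own proof in all essentials: both define $\lambda_z(\af):=\hat{F}(\af,z)$, note that Parameter Independence makes this well defined, and then chain two applications of Nature, using $\mathcal{O}_A=\mathcal{O}_B$ to place the relevant frames in Bob's slot. The only cosmetic difference is which frame is held fixed on Bob's side: you keep Bob at $\af'$ and compare the pairs $(\af,\af')$ and $(\af',\af')$, whereas the paper keeps Bob at $\bv=\af$ and compares $(\af,\af)$ and $(\af',\af)$; these are mirror images of the same argument. Where you genuinely diverge is the closing claim that Freedom serves only to make $X_Z$ nonempty. On the literal formulation of Nature in the paper (which asserts $\hat{F}_i(\af,z)=\hat{G}_j(\bv,z)$ for all $z\in X_Z$ with no further proviso) that claim is defensible and your proof is formally complete. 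But Nature is meant as an empirical constraint, with content only for joint settings $(\af,\bv,z)$ that are actually realized by some $x\in X$; this is precisely why the paper's proof invokes Freedom to produce $x,x'$ with $(A(x),B(x),Z(x))=(\af,\bv,z)$ and $(A(x'),B(x'),Z(x'))=(\af',\bv,z)$ before applying Nature. In your version the combinations $(\af,\af',z)$ and $(\af',\af',z)$ would need to be realized, and Freedom is exactly what guarantees that. So the logical skeleton is right, but Freedom should not be presented as a mere nonemptiness device: on the intended reading of Nature, your two applications of it are licensed only because Freedom makes the corresponding experiments possible.
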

\begin{proof}
First of all, Determinism and Parameter Independence together imply that for every $z\in X_Z$
\begin{equation}
	\lambda_z:\mathcal{O}_A\to T,~\lambda_z(\af):=\hat{F}(\af,z)
\end{equation}
is well-defined.
It remains to be proven that $\lambda_z$ is a frame function on $\mathcal{O}_A$.
Now suppose that $\af,\af'\in\mathcal{O}_A$ and there are $i,j$ such that $P_{a_i}=P_{a'_j}$.
Since $\mathcal{O}_A=\mathcal{O}_B$ there are $\bv,\bv'\in\mathcal{O}_B$ such that $\af=\bv,\af'=\bv'$.
Because of Freedom for every $z\in X_Z$ there are $x,x'\in X$ such that $(A(x),B(x),Z(x))=(\af,\bv,z)$ and $(A(x'),B(x'),Z(x'))=(\af',\bv,z)$.
It then follows from Nature that
\begin{equation}
	\lambda_z(\af)_i=\hat{F}_i(\af,z)=\hat{G}_i(\bv,z)=\hat{F}_j(\af',z)=\lambda_z(\af')_j.
\end{equation} 
Thus $\lambda_z$ is a frame function on $\mathcal{O}_A$.
\end{proof}

The proof of Theorem \ref{thmckfp} now follows from this lemma together with Lemma \ref{lemma} and the Identification Principle.
It is thus tempting to believe that replacing the Identification Principle with its finite precision version may be sufficient to block the proof.
Formally, this is indeed true.
If one takes $\mathcal{O}_A=\mathcal{O}_B=\mathcal{O}_{MKC}$ all premises can be satisfied.
The Nature premise then amounts to the restriction that $x$ is such that 
\begin{equation}
	\hat{F}(\af,z)=\hat{G}(\bv,z)
\end{equation}
whenever $\af=\bv$.
This sparks the idea that there is an even more trivial way to ensure consistency: one could choose $\mathcal{O}_A$ and $\mathcal{O}_B$ such that they have no one-dimensional projections in common.
That is, they are chosen such that the antecedent of Nature is never fulfilled. 

This possibility demonstrates that the formulation of Nature is more strict than its intended meaning.
Nature is supposed to be an empirical constraint and this requires that the antecedent poses an experimental possibility.
Here one finds that the finite precision argument is a double-edged sword: not only does it allow for a weakening of the Identification Principle, but it also justifies a strengthening of the Nature assumption.
Somewhat loosely, it allows the following reformulation.
\begin{itemize}
\item \textbf{Nature}$_\textbf{FP}$ For any pair of frames $\af=[P_{a_1},P_{a_2},P_{a_3}]$, $\bv=[P_{b_1},P_{b_2},P_{b_3}]$, if there are $i,j$ such that $\|P_{a_i}-P_{b_j}\|$ is small, then in most of the cases
\begin{equation}
	\hat{F}_i(\af,z)=\hat{G}_j(\bv,z).
\end{equation} 
\end{itemize}
In fact, an experimental validation of Nature would imply an experimental validation of Nature$_{\text{FP}}$.

Roughly, this amounts to the demand that the coloring function $c_z$ generated by $\lambda_z$ via \eqref{coloring} is continuous on most of the points in $\mathcal{D}(\mathcal{O}_A)$.
Now it doesn't matter much in which way one wishes to make the notion of `most of the points' precise; any reasonable definition will lead to trouble.
This is because it was shown by \citet{Appleby05} that for every coloring function $c$ on a dense subset $\mathcal{D}\subset\mathcal{S}^2$ there is a non-empty open region $D\subset\mathcal{S}^2$ such that $c$ is densely discontinuous on $\mathcal{D}\cap D$.
That is, for every point $a\in \mathcal{D}\cap D$ and every neighborhood $U$ of $a$ there is a point $a'\in U\cap \mathcal{D}\cap D$ such that $c(a)\neq c(a')$.
Then, whenever Determinism, Parameter Independence and Freedom are taken to hold, and $\mathcal{O}_{A}$ and $\mathcal{O}_B$ are assumed to be dense in $\mathcal{F}$ in the MKC-sense, Nature$_{\text{FP}}$ fails. 
This is because no matter how precise the directions of a triad can be determined, there are always frames $\af,\af'\in\mathcal{O}_A$ that are so close to each other that they cannot be distinguished empirically and such that 
\begin{equation}
	\hat{F}_i(\af,z)\neq\hat{F}_j(\af',z),
\end{equation} 
where $a_i$ and $a'_j$ are the directions that align with each other. In short, one has the following:
\begin{thm}\label{thmckfp2}
Determinism, Parameter Independence, Freedom, the Identification Principle$_{\text{FP}}$ and Nature$_{\text{FP}}$ are mutually exclusive.
\end{thm}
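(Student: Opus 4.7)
The plan is to argue by contradiction, building a coloring function out of the hidden-variable assignment and then invoking Appleby's dense-discontinuity result to falsify Nature$_{\text{FP}}$. Suppose all five assumptions hold. Fix $z\in X_Z$. Determinism and Parameter Independence give a well-defined map $\lambda_z^A:\mathcal{O}_A\to T$ by $\lambda_z^A(\af):=\hat{F}(\af,z)$, exactly as in the proof of Lemma \ref{lemma2}. I want to promote this to a coloring function $c_z^A$ on $\mathcal{D}(\mathcal{O}_A)$ via \eqref{coloring}. For $c_z^A$ to be well-defined I need the noncontextuality identity $\lambda_z^A(\af)_i=\lambda_z^A(\af')_j$ whenever $P_{a_i}=P_{a'_j}$.

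To obtain this identity, I would use Freedom together with Identification Principle$_{\text{FP}}$ applied to $\mathcal{O}_B$: since $\mathcal{O}_B$ is dense in $\mathcal{F}$, I can choose a frame $\bv\in\mathcal{O}_B$ containing a direction $b_k$ arbitrarily close to the common vector $a_i=a'_j$. Nature$_{\text{FP}}$ then forces, for most $z$, the equalities $\hat{F}_i(\af,z)=\hat{G}_k(\bv,z)$ and $\hat{F}_j(\af',z)=\hat{G}_k(\bv,z)$, so that $c_z^A$ is a coloring function on a dense subset $\mathcal{D}(\mathcal{O}_A)\subset\mathcal{S}^2$ outside an exceptional set $E\subset X_Z$ which, under any reasonable interpretation of ``most of the cases'', is small. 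The same argument delivers a coloring function $c_z^B$ whose values agree with those of $c_z^A$ on nearby directions in the Nature$_{\text{FP}}$ sense.

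Now I would invoke Appleby's theorem, cited in the text: for any such dense coloring function $c_z^A$ there is a non-empty open $D\subset\mathcal{S}^2$ on which $c_z^A$ is densely discontinuous. Picking any $a\in\mathcal{D}(\mathcal{O}_A)\cap D$ I can find $a'\in\mathcal{D}(\mathcal{O}_A)\cap D$ arbitrarily close to $a$ with $c_z^A(a)\neq c_z^A(a')$. Realizing $a,a'$ as entries of frames $\af,\af'\in\mathcal{O}_A$ and sandwiching a frame $\bv\in\mathcal{O}_B$ containing a direction close to both gives two instances where $\|P_{a_i}-P_{b_k}\|$ and $\|P_{a'_j}-P_{b_k}\|$ are both small yet $\hat{F}_i(\af,z)\neq\hat{F}_j(\af',z)$, so the two instances cannot simultaneously satisfy Nature$_{\text{FP}}$. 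Repeating this over a nontrivial set of $z$'s contradicts the ``most of the cases'' clause.

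The main obstacle is precisely the slack in the phrase ``in most of the cases''. Appleby's result supplies a topological failure of continuity (dense discontinuity on an open region), while Nature$_{\text{FP}}$ is stated in a measure-theoretic or probabilistic spirit. The task is therefore to fix a specific notion of ``most'' (for instance, outside a meager set, or outside a set of measure zero for any reasonable measure respecting the action of $SO(3)$) and to check that Appleby's dense-discontinuity set cannot be absorbed into that negligible class. As the author remarks, ``any reasonable definition will lead to trouble'', and verifying this for a concrete choice is the only technical content beyond the construction above.
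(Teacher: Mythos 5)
Your proposal follows essentially the same route as the paper: build the coloring function $c_z$ from $\lambda_z$ on $\mathcal{D}(\mathcal{O}_A)$, invoke Appleby's dense-discontinuity theorem on a non-empty open region, and conclude that arbitrarily close, empirically indistinguishable directions receive different colors, contradicting Nature$_{\text{FP}}$; your sandwiching of a nearby $\bv\in\mathcal{O}_B$ is just a slightly more careful rendering of the paper's final step. You also flag exactly the caveat the author acknowledges --- that the argument is only as sharp as one's chosen reading of ``in most of the cases'' --- so the proposal matches the paper's (deliberately informal) proof.
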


\section{Remarks and Conclusion}
It has been shown that although formally the finite precision loophole that applies to the Kochen-Specker Theorem also applies to the Conway-Kochen Theorem, this loophole can be closed once it is recognized that the Nature assumption is an empirical assumption rather than one of principle (which does apply to the other assumptions).
Consequently, if an experimental test of the Conway-Kochen Theorem were to be performed (which would constitute an empirical validation of Nature for some set of frames (approximately) within an uncolorable set), the test would also close the finite precision loophole.
This reminds one of the second disadvantage of the Conway-Kochen Theorem as noted by Cator \& Landsman: thus far no experimental test of the theorem has been performed.
If, however, one assumes confidence in the possibility of experimentally testing the theorem and that the results will be in favor of it, then I can only agree with their conclusion that the theorem has a big advantage over Bell's Theorem by not using probability theory.
Indeed, the role of probability in Bell's Theorem has played a significant (albeit sometimes implicit) role in the discussion of what the theorem actually tells us about the world.\footnote{A very incomplete list of examples is \cite{Atkinson01,Fine82,Fine82a,Jones94,Khrennikov09,Brown88,Santos91}.}
Hence it is safe to conclude that the Conway-Kochen Theorem does provide a valuable contribution to the foundational debate even though this wasn't immediately clear when it was first published. 

\section*{Acknowledgments}
I would like to thank N. P. Landsman for encouraging comments on an early sketch of this paper.
This work was supported by the NWO (Vidi project nr. 016.114.354).

\printnomenclature

\printbibliography

\end{document}